\numberwithin{equation}{section}
\theoremstyle{plain}
\newtheorem{theorem}{Theorem}
\numberwithin{theorem}{section}
\newtheorem{lemma}[theorem]{Lemma}          
\newtheorem{proposition}[theorem]{Proposition}
\theoremstyle{definition}
\newtheorem{remark}[theorem]{Remark}
\newtheorem{assumption}[theorem]{Assumption}
\newcommand{\pd}{\partial}
\newcommand{\one}{\mathbbm{1}}
\newcommand{\E}{\mathbb{E}}
\newcommand{\N}{\mathbb{N}}
\newcommand{\Q}{\mathbb{Q}}
\newcommand{\Pb}{\mathbb{P}}
\newcommand{\R}{\mathbb{R}}
\newcommand{\cD}{\mathcal{D}}
\newcommand{\cK}{\mathcal{K}}
\newcommand{\cT}{\mathcal{T}}
\newcommand{\what}{\widehat}
\renewcommand{\(}{\left(}
\renewcommand{\)}{\right)}
\renewcommand{\[}{\left[}
\renewcommand{\]}{\right]}
\DeclareMathOperator*{\argmin}{arg\,min}
\author{
    Jimin Lin\thanks{Quantitative Research, Bloomberg. (E-mail: \href{mailto:jlin846@bloomberg.net}{jlin846@bloomberg.net}).}
    \and Guixin Liu\thanks{Quantitative Research, Bloomberg. (E-mail: \href{mailto:gliu230@bloomberg.net}{gliu230@bloomberg.net}). } 
}
\begin{document}
\title{Neural Term Structure of Additive Process for Option Pricing}
\date{}
\maketitle

\begin{abstract}
The additive process generalizes the L\'evy process by relaxing its assumption of time-homogeneous increments and hence covers a larger family of stochastic processes. Recent research in option pricing shows that modeling the underlying log price with an additive process has advantages in easier construction of the risk-neural measure, an explicit option pricing formula and characteristic function, and more flexibility to fit the implied volatility surface. Still, the challenge of calibrating an additive model arises from its time-dependent parameterization, for which one has to prescribe parametric functions for the term structure. For this, we propose the neural term structure model to utilize feedforward neural networks to represent the term structure, which alleviates the difficulty of designing parametric functions and thus attenuates the misspecification risk. Numerical studies with S\&P 500 option data are conducted to evaluate the performance of the neural term structure.
\end{abstract}

\section{Introduction}
Providing an arbitrage-free valuation formula and specifying risk-neutral dynamics are essentially two sides of the same coin in option pricing. Yet, the modeling methodology has been leaning towards the latter for decades. That is, the invention of an option pricing model typically starts with proposing a stochastic process that is a martingale for the underlying asset, so that the corresponding risk-neural measure is constructed, and henceforth the arbitrage-free option valuation can be determined either analytically or numerically. Such a methodology was established through the pioneering work of \citet{bachelier1900theorie} and \citet{black1973pricing}, and since then, almost all of the prevailing models have been invented along this paradigm. The list includes but is not limited to local volatility models by \citet{dupire1994pricing, cox1996notes}, stochastic volatility models by \citet{heston1993closed, hagan2002managing, bates1996jumps}, jump-diffusion models by \citet{merton1976option, kou2002jump}, and other models built upon L\'evy processes by \citet{madan1998variance, barndorff1997normal}.

Nonetheless, the reverse approach, which first provides an arbitrage-free valuation formula, as in \citet{carr2005note, davis2007range}, and then finds the underlying martingale supporting the formula, is still possible, as noted in \cite{kellerer1972markov, madan2002making}. In recent work, \citet{carr2021additive} starts with one particular pricing formula that yields logistically distributed marginals. Although there is no underlying L\'evy process that produces such marginals, by allowing the increment to be nonstationary, an additive logistic process can be constructed to support that pricing formula. Treatment of using additive processes to price and calibrate options can be traced back to \citet{cont2004financial}. \citet{azzone2023explicit} shows that the additive logistic model has desirable qualities for option pricing - a simple closed formula for fast computation, the flexibility of shape control for calibration, and the ease to simulate with fast Fourier transform-based Monte Carlo \cite{azzone2023fast}.

An appealing alternative approach to generate option pricing models is hence revealed by \citet{carr2021additive, azzone2023explicit}. One can choose a candidate density from an infinitely divisible family and generalize it with some skew transformation to access more control over its moments, such that its marginal density can fit the volatility smile first at one single expiry. Then, if technical conditions hold as in \cite[Theorem 9.8]{ken1999levy}, an additive process supporting such marginal can be constructed in a straightforward manner by making the distribution parameters time-dependent, or in other words, prescribing a \textit{term structure} to the distribution parameter. Finally, it is easy to calculate the compensator to make this additive process a martingale.

It is interesting to compare the \cite{carr2021additive, azzone2023explicit} procedure with the way in which \cite{de2019building, conze2021bass, bourgey2024fast} build arbitrage-free volatility surfaces, analogically. Both start by specifying one or a few marginals and then find the martingale measure supporting the marginals, while complexity occurs in different stages of each process. The latter can freely use non-parametric methods to fit marginals, but acquiring a martingale measure involves complicated approaches, such as Markov functional construction with the Sinkhorn optimization algorithm. On the other hand, for the former, it is not easy to find a proper infinitely divisible marginal to produce an additive process, but once found, the martingale is readily obtained. 

Suppose one successfully constructs an additive process with a proper distribution, it remains non-trivial to specify its term structure, i.e., a set of time-dependent, deterministic, and continuous functions such that the measure agrees with the market behaviors. For example, \cite{azzone2023explicit} provides parametric candidates for the term structure based on a detailed analysis that covers short- and long-maturity asymptotics, large strike asymptotics, and behaviors of the model-induced implied volatility.

In this research, we alleviate the difficulty of term structure selection by utilizing the universal approximation property of a feedforward neural network. Constraints on the term structure can be enforced by regulating the neural network behavior with penalty terms. We call the term structure represented by a neural network the \textit{neural term structure}. Throughout this research, we exemplify the usage of neural term structure specifically on the additive logistic process, but this framework can be easily implemented with any other additive process.

Numerical studies are conducted to assess the performance of the neural term structure with additive logistic processes. We first see that the neural term structure can fit the synthetic surface generated by the parametric term structure in \cite{azzone2023explicit}. Then with an S\&P 500 implied volatility surface, we show that the neural term structure better calibrates to the real market data than the chosen parametric term structure. Meanwhile, we notice there may exist multiple dissimilar term structures that can achieve close calibration performance on a single surface. This raises concerns about the robustness of the calibration. For this, we extend the neural term structure to jointly train on a sequence of historical surfaces. It produces a smooth flow of term structures that fits both generic surfaces and specific local dynamics, which is visualized as the \textit{term structure surfaces}.

The remaining content of this paper is organized as follows. Section~\ref{sec:additive} scrutinizes the new option pricing model generating procedure based on the additive process, exemplified with the additive logistic process that is extensively studied in \cite{carr2021additive, azzone2023explicit}. Section~\ref{sec:cali} discusses methods for calibrating the additive process based model and introduces the neural term structure for fitting a single implied volatility surface and a sequence of historical surfaces. Section~\ref{sec:exp} demonstrates the advantage of the neural term structure with numerical experiments. Section~\ref{sec:con} concludes this paper.

\section{Additive Option Pricing Model} \label{sec:additive}

In this section, we first provide a review of the most relevant properties of the generalized logistic (GL) distribution. More details on GL can be found in \cite{balakrishnan1991handbook, barndorff1982normal}. Then, following the work of \citet{carr2021additive, azzone2023explicit}, we explain the construction of the additive logistic process to model the underlying log-price process and the corresponding closed-form option pricing formula.

\subsection{Generalized Logistic Distribution} \label{subsec:gl}

Denote $\text{L}(\mu, \sigma)$ the location-scale logistic distribution, $\mu \in \R$ and $\sigma \in \R_+$. For $x \in \R$, let $f^L$ and $F^L$ be respectively its PDF and CDF given by
\begin{align}
f^L(x; \mu, \sigma) = \frac{e^{-\frac{x-\mu}{\sigma}}}{\sigma \(1 + e^{-\frac{x-\mu}{\sigma}}\)^2}, &&
F^L(x; \mu, \sigma) = \frac{1}{1 + e^{-\frac{x-\mu}{\sigma}}}.
\end{align}
The generalized logistic distribution of type IV, also known as logistic-beta distribution, $\text{GL}(\mu, \sigma, \alpha, \beta)$ with $\alpha, \ \beta \in \R_+$, is generated by applying beta-skew-transformation to $\text{L}(\mu, \sigma)$, whose PDF $f^{GL}$ and CDF $F^{GL}$ are given by
\begin{align}
f^{GL}(x; \mu, \sigma, \alpha, \beta)
    &= \frac{1}{B(\alpha, \beta)} f^L(x; \mu, \sigma) \(F^L(x; \mu, \sigma)\)^{\alpha-1} \(1 - F^L(x; \mu, \sigma)\)^{\beta - 1}, \\
F^{GL}(x; \mu, \sigma, \alpha, \beta)
    &= \frac{1}{B(\alpha, \beta)} \int_0^{F^L(x; \mu, \sigma)} u^{\alpha-1} \(1-u\)^{\beta-1} \dd u.
\end{align}
where $B(\alpha, \beta)=\frac{\Gamma(\alpha)\Gamma(\beta)}{\Gamma(\alpha+\beta)}$ is the beta function and $\Gamma$ is the gamma function. Note that $F^{GL}$ is obtained simply by a change of variable $\dd u = F^L(\dd x; \mu, \sigma)$ to the integral of $f^{GL}$ and can be recognized as the regularized incomplete beta function integrated up to $F^L(x; \mu, \sigma)$. In addition, using the property of the location-scale family, we can pivot the $\text{L}(\mu, \sigma)$ and hence define the standard GL distribution with skew parameters $\alpha$ and $\beta$ by
$$
\Phi(x; \alpha, \beta):= F^{GL}(x; 0, 1, \alpha, \beta),
$$
which is useful to express the option pricing formula more compactly in the later content. 

$\text{GL}(\mu, \sigma, \alpha, \beta)$ has the characteristic function for $z \in \R$ 
\begin{align} \label{eq:char}
\what{f}^{GL}(z; \mu, \sigma, \alpha, \beta) = \(\frac{B(\alpha + i \sigma z, \beta - i \sigma z)}{B(\alpha, \beta)}\) e^{i\mu z}.
\end{align}
The finite divisibility and self-decomposability of $\text{GL}(\mu, \sigma, \alpha, \beta)$ are important to construct an additive logistic process, which can be identified through its characteristic function.
\begin{lemma} \label{lemm:char}
$\text{GL}(\mu, \sigma, \alpha, \beta)$ is infinitely divisible with the L\'evy triplet $(a, 0, v(x)\dd x)$ given by
\begin{align} \label{eq:triplet}
a
    = \sigma \int_{0}^{\sigma^{-1}} \frac{e^{-\beta x} - e^{-\alpha x}}{1 - e^{-x}} \dd x + \mu, &&
v(x)
    = \frac{e^{-\(\beta\one_{x>0} + \alpha\one_{x<0}\)\frac{\abs{x}}{\sigma}}}{\abs{x}\(1-e^{\frac{-\abs{x}}{\sigma}}\)}.
\end{align}
Moreover, it is also self-decomposable.
\end{lemma}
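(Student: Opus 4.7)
The plan is to exhibit $\log \what{f}^{GL}$ in L\'evy--Khintchine canonical form, read off $(a,0,v)$ directly, and then invoke the standard monotonicity criterion for self-decomposability. Using $B(a,b)=\Gamma(a)\Gamma(b)/\Gamma(a+b)$ together with the cancellation $(\alpha+i\sigma z)+(\beta-i\sigma z)=\alpha+\beta$, I would first write $\log \what{f}^{GL}(z) = i\mu z + \log[\Gamma(\alpha+i\sigma z)/\Gamma(\alpha)] + \log[\Gamma(\beta-i\sigma z)/\Gamma(\beta)]$. Next I would apply Malmsten's integral representation
$$\log\frac{\Gamma(a+w)}{\Gamma(a)} = \int_0^\infty \left[w e^{-t} - \frac{e^{-at}(1-e^{-wt})}{1-e^{-t}}\right]\frac{\dd t}{t}, \qquad \mathrm{Re}(a),\,\mathrm{Re}(a+w)>0,$$
at $(a,w)=(\alpha, i\sigma z)$ and $(a,w)=(\beta, -i\sigma z)$. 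The linear pieces $\pm i\sigma z\, e^{-t}/t$ cancel on addition, and the change of variables $x=\sigma t$ on the $\beta$-term together with $x=-\sigma t$ on the $\alpha$-term consolidates the rest into $\log \what{f}^{GL}(z) - i\mu z = \int_{\R\setminus\{0\}}(e^{izx}-1)\,v(x)\,\dd x$, with $v$ precisely as in \eqref{eq:triplet}. Since $v(x)\sim \sigma/x^2$ at the origin, $v$ is a bona fide L\'evy measure and the Gaussian component vanishes.

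To cast this in the canonical L\'evy--Khintchine form with truncation $\one_{\abs{x}\le 1}$, I would add and subtract $iz\int_{\abs{x}\le 1} x\,v(\dd x)$; the latter absorbs into the drift. Direct evaluation followed by the substitution $y=x/\sigma$ yields $\int_{\abs{x}\le 1} x\,v(x)\,\dd x = \sigma\int_0^{\sigma^{-1}}(e^{-\beta y}-e^{-\alpha y})/(1-e^{-y})\,\dd y$, which combined with the $\mu$ from the linear term gives the drift $a$ of \eqref{eq:triplet} and completes the identification of the triplet. For self-decomposability I would invoke Sato's criterion (e.g., Thm.~15.10 of \textit{L\'evy Processes and Infinitely Divisible Distributions}): it suffices to verify that $v(\dd x)=k(x)\abs{x}^{-1}\,\dd x$ with $k\ge 0$, nonincreasing on $(0,\infty)$ and nondecreasing on $(-\infty,0)$. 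For $x>0$ one reads off $k(x)=e^{-\beta x/\sigma}/(1-e^{-x/\sigma})$, a product of two strictly decreasing positive factors and hence strictly decreasing; the symmetric argument disposes of $x<0$.

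The main obstacle is technical rather than conceptual. Malmsten's formula must be applied at complex arguments $w=\pm i\sigma z$, which is legitimate by analytic continuation since both sides are holomorphic on the half-plane $\mathrm{Re}(a+w)>0$ and agree on the positive real axis. The delicate point is integrability near $t=0$: each summand of the combined integrand is only $O(1/t)$ on its own, but the leading singularities cancel after addition (using $1-e^{\mp i\sigma z t} = \pm i\sigma z\, t + O(t^2)$ together with $e^{-\alpha t}-e^{-\beta t}=O(t)$), leaving a bounded combined integrand that legitimizes Fubini throughout the passage from the Malmsten representation to the L\'evy--Khintchine form.
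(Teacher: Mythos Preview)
Your proposal is correct and follows essentially the same route as the paper: both arguments feed the log-gamma integral representation (you package it as Malmsten's formula for the ratio $\log\Gamma(a+w)/\Gamma(a)$, the paper writes $\log\Gamma(z)$ directly and takes differences) into the cumulant to extract the L\'evy--Khintchine triplet, and both verify self-decomposability via the monotonicity of $\abs{x}v(x)$ on each half-line (you cite Sato Thm.~15.10, the paper cites the equivalent Lukacs Thm.~5.11.2). Your treatment is in fact more explicit about the change of variables, the drift computation, and the analytic-continuation/integrability justification near $t=0$, which the paper leaves to the reader.
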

\begin{proof}
Express the beta function in \eqref{eq:char} with the integral representation of the log-gamma function
\begin{align}
\log \Gamma(z) = \int_0^{\infty} \(\frac{e^{-zx}-e^{-x}}{x(1-e^{-x})} - (z-1)\frac{e^{-x}}{x}\) \dd x, && \Re(z) > 0.
\end{align}
The cumulant function, i.e. natural logarithm of the characteristic function \eqref{eq:char} can be organized into the representation
\begin{align}
\log \what{f}^{GL}(z; \mu, \sigma, \alpha, \beta) = i z a - \frac{1}{2} z^2 b + \int_{\R} \(e^{izx} - 1 - iz x\one_{\abs{x}<1}\) v(x) \dd x,
\end{align}
where the triplet $(a, b, v(x) \dd x)$ reads $b=0$, $a$ and $v(x)$ are as in the Lemma~\ref{lemm:char}; hence, by the L\'evy-Khintchine Theorem we have the infinitely divisibility. Notice the criteria $\dd (\abs{x} v(x)) / \dd x$ is positive for $x < 0$ and negative for $x > 0$, the self-decomposability follows by \cite[Theorem~5.11.2]{lukacs1970characteristic}.
\end{proof}

\subsection{Additive Logistic Process}
Based on Lemma~\ref{lemm:char}, we can construct an additive process whose marginal distribution is $\text{GL}$ by simply replacing the constant parameters $\mu, \sigma, \alpha, \beta$ in Section~\ref{subsec:gl} by appropriate continuous functions of time. These time-dependent parameters, named the \textit{term structure}, jointly determine the cumulants, especially the mean, volatility, skewness, and kurtosis of interest, of the entire process. Consequently, we also have the L\'evy triplet in Equation~\eqref{eq:triplet} become time-dependent here, $(a(t), 0, v(t, x) \dd x)$. Assumptions required on the term structure are as follows.
\begin{assumption} \label{ass:term}
$\mu : \R_+ \to \R$ and $\sigma, \alpha, \beta: \R_+ \to \R_+$ are continuous functions and we have
    \begin{enumerate}[$\quad$ i.]
        \item $\frac{\alpha}{\sigma}$ and $\frac{\beta}{\sigma}$ non-increasing, \label{ass:one}
        \item $\sigma$ non-decreasing and $\sigma(0)=0$, \label{ass:two}
        \item $0 < \alpha,\ \beta < \infty$, \label{ass:three}
        \item $\sigma < \beta$. \label{ass:four}
    \end{enumerate}
\end{assumption}
The first three conditions in Assumption~\ref{ass:term} are sufficient to construct the additive logistic process.
\begin{proposition} \label{prop:additive}
For $\mu$, $\sigma$, $\alpha$, and $\beta$ satisfying conditions~\ref{ass:one} - \ref{ass:three} in Assumption~\ref{ass:term}, there exists in law a unique additive process $(X_t)_{t>0}$ starting at zero such that
\begin{align}
    X_t \sim \text{GL}\(\mu(t), \sigma(t), \alpha(t), \beta(t)\).
\end{align}
\end{proposition}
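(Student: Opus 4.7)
The plan is to apply Sato's embedding theorem for additive processes, \cite[Theorem~9.8]{ken1999levy}, which states that a family $\{\mu_t\}_{t\geq 0}$ of infinitely divisible laws with generating triplets $(A(t),\nu_t,\gamma(t))$ arises as the marginal distributions of a unique-in-law additive process starting at the origin if and only if the triplet vanishes at $t=0$, the Gaussian components are monotone in the positive-semidefinite ordering, the L\'evy measures satisfy $\nu_s(B)\leq\nu_t(B)$ for $s\leq t$ and every Borel $B$ bounded away from zero, and the triplet is continuous in $t$. The strategy is simply to verify each of these conditions for the triplet furnished by Lemma~\ref{lemm:char}, now viewed as a function of $t$.

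By Lemma~\ref{lemm:char}, $\text{GL}(\mu(t),\sigma(t),\alpha(t),\beta(t))$ is infinitely divisible with triplet $(a(t),0,v(t,x)\dd x)$ obtained from~\eqref{eq:triplet} by evaluating the four functions at time $t$. The vanishing Gaussian component is trivially monotone, and continuity of $a(t)$ and of $v(t,x)$ in $t$ (for each fixed $x\neq 0$) is a direct consequence of the continuity of $\mu,\sigma,\alpha,\beta$ assumed in Assumption~\ref{ass:term}. The core step is monotonicity of the L\'evy density: for $x>0$ one rewrites
\begin{equation*}
v(t,x) \;=\; \frac{1}{x}\cdot\frac{\exp\bigl(-(\beta(t)/\sigma(t))\,x\bigr)}{1-\exp(-x/\sigma(t))}.
\end{equation*}
Condition~\ref{ass:one} of Assumption~\ref{ass:term} makes $\beta/\sigma$ non-increasing, so the numerator is non-decreasing in $t$; condition~\ref{ass:two} makes $\sigma$ non-decreasing, so $1-\exp(-x/\sigma(t))$ is non-increasing and its reciprocal is non-decreasing. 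The product is therefore non-decreasing in $t$, giving $v(s,x)\leq v(t,x)$ for all $s\leq t$; the case $x<0$ is handled identically with $\alpha$ in place of $\beta$. Pointwise monotonicity of the density immediately yields the measure inequality required by Sato's theorem.

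For the initial condition, $\sigma(0)=0$ forces $\exp(-x/\sigma(t))\to 0$ and $\exp(-(\beta/\sigma)(t)|x|)\to 0$ for every $x\neq 0$ as $t\downarrow 0$, so $v(0,\cdot)\equiv 0$; a short dominated-convergence argument on~\eqref{eq:triplet}, combined with the normalization $\mu(0)=0$ natural for a process starting at the origin, then yields $a(0)=0$. Uniqueness in law follows automatically, since the increment $X_t-X_s$ has characteristic function $\what{f}^{GL}(z;t)/\what{f}^{GL}(z;s)$ and this determines all finite-dimensional distributions. The principal technical hurdle is the pointwise monotonicity of $v(t,\cdot)$; once the explicit factorization above is written down it reduces to two elementary one-variable monotonicities, with condition~\ref{ass:three} entering only to ensure that $v(t,x)\dd x$ remains a bona fide L\'evy measure (i.e., integrates $\min(1,x^2)$) for every $t>0$.
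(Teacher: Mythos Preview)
Your proposal is correct and follows exactly the same route as the paper: both invoke the second part of \cite[Theorem~9.8]{ken1999levy} after checking that the time-indexed L\'evy triplet $(a(t),0,v(t,x)\dd x)$ from Lemma~\ref{lemm:char} vanishes at $t=0$, is continuous in $t$, and has $v(s,\cdot)\le v(t,\cdot)$ for $s\le t$. You in fact supply more detail than the paper (which defers the verification to \cite{carr2021additive}), in particular the clean factorization of $v(t,x)$ that reduces monotonicity to conditions~\ref{ass:one} and~\ref{ass:two} directly.
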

\begin{proof}
It is easy to verify that both $a(t)$ and $v(t, \cdot)$ are continuous for $t \ge 0$, $a(0)=0$, $v(0, \cdot)=0$, and $v(s, \cdot) \le v(t, \cdot)$ for $0 \le s \le t$, see \cite{carr2021additive} for details. The result follows by applying the second part of \cite[Theorem~9.8]{ken1999levy}.
\end{proof}
Notice from the $\text{GL}$ characteristic function \eqref{eq:char} that we will need condition \ref{ass:four} in Assumption~\ref{ass:term} for $X_t$ to have a finite expectation. With one more extra condition on $\mu$, we can hitherto obtain an additive logistic martingale.
\begin{proposition}\label{prop:additive_mart}
Assume $r > 0$ is the constant risk-free rate, $\sigma$, $\alpha$, and $\beta$ satisfy Assumption~\ref{ass:term}, and
\begin{align} \label{eq:mu}
\mu(t) = \log\left(\frac{B(\alpha(t) + \sigma(t), \beta(t) - \sigma(t))}{B(\alpha(t), \beta(t))}\right),
\end{align}
then there exists in law a unique additive logistic process $(X_t)_{t\ge 0}$
\begin{align} \label{eq:additive}
X_t \sim \text{GL}\(rt - \mu(t), \sigma(t), \alpha(t), \beta(t)\),
&& X_0 = 0,
\end{align}
whose discounted exponential $e^{-r t} S_t = e^{-r t + X_t}$ is a $\Q$-martingale.
\end{proposition}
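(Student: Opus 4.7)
The proof naturally splits into an existence claim for the additive process and a verification of the martingale property for its discounted exponential.

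The plan is to first deduce existence by appealing directly to Proposition~\ref{prop:additive}. Conditions \ref{ass:one}--\ref{ass:three} in Assumption~\ref{ass:term} constrain only $\sigma, \alpha, \beta$, so shifting the location from $\mu(t)$ to $rt - \mu(t)$ does not disturb them; I just need to check that this new location is continuous in $t$ and vanishes at $t=0$. Continuity of $rt-\mu(t)$ follows from continuity of the beta function on the open right half plane, combined with continuity of $\alpha, \beta, \sigma$ (and here condition \ref{ass:four}, $\sigma<\beta$, is essential because it keeps the argument $\beta(t)-\sigma(t)$ strictly positive so that $B(\alpha(t)+\sigma(t),\beta(t)-\sigma(t))$ is finite and the logarithm in \eqref{eq:mu} is well-defined). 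Vanishing at zero comes from $\sigma(0)=0$, which forces $B(\alpha(0)+\sigma(0),\beta(0)-\sigma(0))=B(\alpha(0),\beta(0))$ and hence $\mu(0)=0$. Thus Proposition~\ref{prop:additive} delivers a unique-in-law additive process $(X_t)_{t\ge 0}$ with the stated GL marginals and $X_0=0$.

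Next I would tackle the martingale property. Because $X_t$ is additive, its increments are independent, so for $s\le t$ the process $X_t-X_s$ is independent of $\mathcal{F}_s$ and it suffices to show
\begin{align}
\E[e^{X_t-X_s}] = e^{r(t-s)}.
\end{align}
I would compute the moment generating function of $X_t$ by analytically continuing the characteristic function \eqref{eq:char} to $z=-i$: setting $z=-i$ in $\hat{f}^{GL}(z;\,rt-\mu(t),\sigma(t),\alpha(t),\beta(t))$ produces
\begin{align}
\E[e^{X_t}]
= \frac{B(\alpha(t)+\sigma(t),\beta(t)-\sigma(t))}{B(\alpha(t),\beta(t))}\,e^{rt-\mu(t)},
\end{align}
and the definition \eqref{eq:mu} of $\mu(t)$ makes the beta-ratio cancel the $e^{-\mu(t)}$ factor exactly, leaving $e^{rt}$. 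Applying the same computation at $s$ and using the independent-increment factorization of the characteristic function (with parameters $(\alpha(t)-\alpha(s)$-type differences absorbed by the additive structure) yields $\E[e^{X_t-X_s}]=e^{r(t-s)}$, which is precisely the martingale condition.

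The main obstacle I anticipate is justifying the extension of the characteristic function from real $z$ to the complex point $z=-i$, i.e.\ verifying that $\E[e^{X_t}]<\infty$ and equals the formal expression above. Condition \ref{ass:four} is what makes this legitimate: on the strip $\Im(z)\in(-\beta/\sigma,\alpha/\sigma)$ the beta function $B(\alpha+i\sigma z,\beta-i\sigma z)$ is analytic, and $\sigma<\beta$ guarantees that $z=-i$ lies inside this strip, so the moment generating function exists and is given by analytic continuation of \eqref{eq:char}. I would make this rigorous either by a direct dominated-convergence argument on the GL density (noting the $e^{-\beta|x|/\sigma}$ tail in the L\'evy measure from \eqref{eq:triplet} is integrable against $e^{x}$ iff $\beta/\sigma>1$) or by invoking the standard result that an infinitely divisible distribution has an MGF on the strip where its L\'evy measure has sub-exponential tails of the corresponding order. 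Once this finiteness is in hand, everything else is bookkeeping.
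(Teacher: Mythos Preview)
Your proposal is correct and follows essentially the same route as the paper: existence via Proposition~\ref{prop:additive}, then martingality by evaluating the characteristic function at $z=-i$ and using \eqref{eq:mu} to get $\E[e^{X_t}]=e^{rt}$. You are in fact more careful than the paper, which simply writes $e^{-rt}\E[S_t]=\varphi_t(-i)=1$ without spelling out the independent-increment step $\E[e^{X_t-X_s}]=\varphi_t(-i)/\varphi_s(-i)=e^{r(t-s)}$, the continuity and vanishing of $rt-\mu(t)$ at $t=0$, or the role of condition~\ref{ass:four} in justifying the analytic continuation to $z=-i$; your parenthetical about ``$\alpha(t)-\alpha(s)$-type differences'' is unnecessary and slightly misleading, since the increment is handled purely through the ratio $\varphi_t/\varphi_s$ rather than any explicit GL parametrization.
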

\begin{proof}
Existence follows from Proposition~\ref{prop:additive}. For martingality, inserting Equation~\eqref{eq:mu} to Equation~\eqref{eq:char} we have $e^{-r t}\E[S_t] = \varphi_t(-i) = 1$, where $\varphi_t(\cdot)=\hat{f}^{GL}(\cdot \ ; rt - \mu(t), \sigma(t), \alpha(t), \beta(t))$.
\end{proof}
\begin{remark} \label{rema:stationary}
Note that the stationary increment assumption, which we are used to in L\'evy processes, is dropped for additive processes, i.e. for all $0 \le s < t$, $X_t - X_s \simeq X_{t-s}$ does not necessarily hold. With additive processes, one should be careful of any methods based on time-homogeneity property. For instance, $X_t$ cannot be simulated with conventional Monte Carlo i.i.d. sampling. Nonetheless, we still have the analytic characteristic function for increment $\E[e^{iz(X_t - X_s)}] = \varphi_t / \varphi_s$ that can be used to simulate $X_t$ by fast Fourier transform \cite{azzone2023fast}.
\end{remark}

\subsection{Option Pricing Formula}
Assume the price process of the underlying asset follows $S_t = S_0 e^{X_t}$ where $S_0$ is the spot price and $X_t$ is an additive logistic martingale as in Proposition~\ref{prop:additive_mart}. The explicit marginal density of $X_t$ leads to the explicit pricing formulas for the initial value of options.
\begin{proposition}
\label{prop:option}
At time $t=0$, the initial value of an European call $C(K, T)$ and put $P(K, T)$ with strike $K$ and time-to-maturity $T$ is priced by
\begin{equation}\label{eq:option}
\begin{aligned}
C(K, T) 
    &= S_0 \Phi\(d; \beta(T) - \sigma(T), \alpha(T) + \sigma(T)\) - e^{-r T} K \Phi\(d; \beta(T), \alpha(T)\), \\
P(K, T)
    &= e^{-r T} K \Phi\(-d; \alpha(T), \beta(T)\) - S_0 \Phi\(-d; \alpha(T) + \sigma(T), \beta(T) - \sigma(T)\),
\end{aligned}
\end{equation}
where
$$
d = \frac{\log\frac{S_0}{K} + r T - \mu(T) }{\sigma(T)}.
$$
\end{proposition}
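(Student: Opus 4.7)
The plan is to write the call as $C(K,T) = e^{-rT}\E[S_T\, \one_{S_T > K}] - e^{-rT} K\, \Pb(S_T > K)$ and express each piece as a value of the standard GL CDF $\Phi$. The second term is direct: by the location-scale structure, $Z := (X_T - rT + \mu(T))/\sigma(T) \sim \text{GL}(0,1,\alpha(T),\beta(T))$, so $\{S_T > K\} = \{Z > -d\}$, and the symmetry identity $1 - \Phi(x;\alpha,\beta) = \Phi(-x;\beta,\alpha)$ — immediate from $B(\alpha,\beta)=B(\beta,\alpha)$ and the substitution $v = 1-u$ in the regularized incomplete beta integral, together with $F^L(-x;0,1) = 1 - F^L(x;0,1)$ — gives $\Pb(S_T > K) = \Phi(d;\beta(T),\alpha(T))$.

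For the first term, I would perform an Esscher-type share-measure change. Rearranging the explicit density yields the compact form
\begin{align*}
f^{GL}(x; m, \sigma, \alpha, \beta) = \frac{1}{\sigma B(\alpha,\beta)} \cdot \frac{e^{-\beta (x-m)/\sigma}}{\bigl(1 + e^{-(x-m)/\sigma}\bigr)^{\alpha+\beta}},
\end{align*}
from which the central reweighting identity
\begin{align*}
e^x f^{GL}(x; m, \sigma, \alpha, \beta) = e^{m}\, \frac{B(\alpha+\sigma, \beta-\sigma)}{B(\alpha,\beta)}\, f^{GL}(x; m, \sigma, \alpha+\sigma, \beta-\sigma)
\end{align*}
follows by matching the numerator exponent $-\beta + \sigma = -(\beta - \sigma)$ while keeping the denominator exponent $\alpha + \beta$ intact. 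This step requires $\beta - \sigma > 0$, which is precisely condition~\ref{ass:four} of Assumption~\ref{ass:term}. With $m = rT - \mu(T)$ and $\mu(T)$ given by \eqref{eq:mu}, the prefactor collapses to $e^{rT}$, so that $\E[e^{X_T}] = e^{rT}$ (reconfirming the martingale property) and the share-measure law of $X_T$ is $\text{GL}(rT - \mu(T), \sigma(T), \alpha(T) + \sigma(T), \beta(T) - \sigma(T))$. Standardizing and reapplying the symmetry identity then gives $e^{-rT}\E[S_T\, \one_{S_T > K}] = S_0\, \Phi(d; \beta(T)-\sigma(T), \alpha(T)+\sigma(T))$.

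Combining the two pieces produces the call formula, and the put follows either by the analogous decomposition of $(K-S_T)^+$ (the symmetry identity now hits the lower tail, giving $\Phi(-d;\cdot,\cdot)$ factors) or directly from put-call parity $C - P = S_0 - e^{-rT} K$. The main obstacle is the density reweighting identity: it is a short algebraic manipulation once $f^{GL}$ is written in the compact exponential form above, but condition~\ref{ass:four} must be invoked both to legitimize $\beta - \sigma$ as a GL shape parameter and to ensure $\E[e^{X_T}]$ is finite in the first place.
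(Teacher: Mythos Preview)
Your proposal is correct and follows essentially the same route as the paper: decompose the payoff, pivot via the location--scale structure of $\text{GL}$, and handle the stock term by an exponential tilt that shifts $(\alpha,\beta)\mapsto(\alpha+\sigma,\beta-\sigma)$. The only cosmetic differences are that the paper derives the put first and obtains the call via put--call parity and the beta reflection symmetry (your identity $1-\Phi(x;\alpha,\beta)=\Phi(-x;\beta,\alpha)$), whereas you start from the call; and the paper standardizes to $Z$ before tilting while you tilt $X_T$ directly and then standardize --- your explicit density reweighting identity is precisely what underlies the paper's one-line claim $\E[e^{Z}\one_{Z\le -d}]=\Pb(\tilde Z\le -d)$.
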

\begin{proof}
Pivoting the log-return $X_T \simeq rT - \mu(T) + \sigma(T) Z$ with a standard GL random variable $Z \sim \text{GL}(0, 1, \alpha(T), \beta(T))$ we have
\begin{align}
P(K,T)
    &= e^{-rT}\E\[\(K - S_T\)^+\] = e^{-rT}\E\[\(K - S_0 e^{X_T}\)\one_{Z \le -d}\] \\
    &=  e^{-rT} K \Pb\(Z \le -d\) - S_0 \E\[e^{Z}\one_{Z \le -d}\].
\end{align}
Notice that $\E\[e^{Z}\one_{Z \le -d}\] = \Pb(\tilde{Z}\le -d)$ with $\tilde{Z} \sim \text{GL}(0, 1, \alpha(T) + \sigma(T), \beta(T) - \sigma(T))$, and reexpress $\Pb$ by $\Phi$, we obtain the result. $C(K, T)$ follows by put-call parity and beta reflection symmetry.
\end{proof}
\begin{remark}\label{rema:option}
The pricing formulas \eqref{eq:option} are analogous to, and yield the same simplicity and computability as the Black-Scholes formula with standard GL CDFs with prescribed term structure $(\sigma, \alpha, \beta)$ substituting the standard normal CDF. However, it is not directly applicable to price the option price process, say $C_t(K, T)$ conditioning in $S_t$ for $t > 0$, due to the time-inhomogeneity, as noted in Remark~\ref{rema:stationary}.
\end{remark}

\section{Calibrating additive model with neural term structure} \label{sec:cali}
With the additive logistic martingale \eqref{eq:additive} constructed, it remains to specify functional forms for term structure $(\sigma, \alpha, \beta) \in \Xi$ to calibrate the model to the implied volatility surface. Selecting the term structure is non-trivial and subject to the modeler's empirical experience to translate between the surface characteristic and the proper functional forms, and it is impractical to manually configure the functions by trial and error. A uniform framework to find the optimal term structure is desired.

For this, we design the machine learning-based approach by utilizing a feedforward neural network that is regulated by Assumption~\ref{ass:term} to represent the term structure, named the \textit{neural term structure}. The neural term structure is not only capable of calibrating to a single volatility surface, but is also scalable to fit a long historical sequence of implied volatility surfaces to capture the surface dynamic. In this section, we begin with a single surface calibration case, and then we will see that it is effortless to extend the method for joint calibration on surface dynamics.

\subsection{Calibration with a single surface} \label{subsec:cali_one}
One implied volatility represents the information of an option chain, i.e., a set of market options listed by a grid of combinations of strikes and maturities, that are quoted at one time. Let $\cK = \{K_1, \dots, K_n\}$ and $\cT = \{T_1, \dots, T_m\}$ respectively collect $n \in \N$ strikes and $m \in \N$ maturity dates, let each $(K, T) \in \cK \times \cT$, $C^M(K, T)$ be the market price of a call option with strike $K$ and maturity $T$, and let $C(K, T; \sigma(T), \alpha(T), \beta(T))$ be the call price given by formula \eqref{eq:option} prescribed by the term structure $(\sigma, \alpha, \beta)$. Let $\Xi$ be the space of all functional forms of term structure that satisfy the Assumption~\ref{ass:term}. The calibration task is to minimize the pricing error
\begin{align} \label{eq:obj_p}
\min_{(\sigma, \alpha, \beta) \in \Xi} L_P(\sigma, \alpha, \beta).
\end{align}
Based on specific calibration purposes, different metrics and weighting schemes can be applied to $L_P$, such as bid-ask spread weighted mean square error and Black-Scholes vega weighted mean square error in \cite{hamida2005recovering}. Here we simply use the plain mean squared error
\begin{align}
L_P(\sigma, \alpha, \beta)
    = \frac{1}{\abs{\cK}\abs{\cT}}\sum_{\substack{K\in \cK \\ T \in \cT}} \(C^M(K, T) - C(K, T; \sigma(T), \alpha(T), \beta(T))\)^2.
\end{align}

There are three approaches to minimizing $L_P$.

\textbf{Calibration on slices of marginals}: naively, one can fit marginals slice by slice with scalar parameters, and then interpolate the calibrated parameters between slices to construct the term structure. Ideally, such a term structure is feasible, otherwise ad-hoc adjustment is needed. That is, for every fixed maturity $T_i \in \cT$ $(1 \le i \le m)$, one finds $(\sigma_{T_i}, \alpha_{T_i}, \beta_{T_i}) \in \R_+^3$ such that
\begin{align}
(\hat{\sigma}_{T_i}, \hat{\alpha}_{T_i}, \hat{\beta}_{T_i}) = \argmin_{\sigma, \alpha, \beta} \frac{1}{\abs{\cK}} \sum_{K\in \cK} \(C^M(K, T_i) - C(K, T_i; \sigma, \alpha, \beta)\)^2.
\end{align}

Next, the term structure $(\hat{\sigma}, \hat{\alpha}, \hat{\beta})$ is constructed by interpolating the time slices $(\hat{\sigma}(T_i), \hat{\alpha}(T_i), \hat{\beta}(T_i)) = (\hat{\sigma}_{T_i}, \hat{\alpha}_{T_i}, \hat{\beta}_{T_i})$ for $1 \le i \le m$, with either lines, polynomials or splines. The interpolated term structure is feasible if $(\hat{\sigma}, \hat{\alpha}, \hat{\beta}) \in \Xi$.

However, this approach does not guarantee feasibility. A limited number of samples in each slice could induce noise in parameter estimation, which poses difficulty in obtaining a proper interpolation. Should the infeasibility occur, one still needs to jointly regulate the term structure across the slices, which in turn undermines the straightforwardness of this approach.

\textbf{Calibration with prescribed functions}: one can design parametric functions to fully control the term structure according to its knowledge of the implied volatility surface behavior. This reduces the optimization problem from the functional space $\Xi$ to a smaller space of scalar parameters. \citet{carr2021additive} tests the simplest symmetric standard GL with
\begin{align}
\sigma(\tau) = \left(1 - e^{-\tau \sigma_0^{1/H}}\right)^H,
&& \alpha(\tau)=\beta(\tau) \equiv 1,
\end{align}
where the term structure is parameterized by only two parameters, the volatility level $\sigma_0$ and volatility term exponent $H$.

\citet{azzone2023explicit} provides for each term structure function a simple form and a sophisticated form
\begin{equation} \label{eq:term_func}
\begin{aligned}
&\sigma^a(\tau) = \sigma_0 \tau^{H_0},  
&&\alpha^a(\tau) = \alpha_0, 
&&\beta^a(\tau) = \beta_0 + \sigma^a(\tau) \\
&\sigma^b(\tau) = \frac{\arctan(\sigma_0 \frac{\pi}{2}\tau^{H_0})}{\arctan(\sigma_1^{-1} \frac{\pi}{2}\tau^{-H_1})}, 
&&\alpha^b(\tau) = \alpha_1 + \frac{\alpha_0 - \alpha_1}{1 + \sigma(\tau)},
&&\beta^b(\tau) = \beta_1 + \frac{\beta_0 - \beta_1}{1 + \sigma^b(\tau)} + \sigma^b(\tau),
\end{aligned}
\end{equation}
where $\sigma_0, \sigma_1, \alpha_0, \alpha_1, \beta_0, \beta_1, H_0, H_1$ are parameters. One can choose any combination of them to configure the term structure. The basic term functions can serve as baselines and the sophisticated ones with more parameters have better flexibility to control the shape of the implied volatility surface.

The obvious shortcomings of this approach are that it is not easy to design proper functions and that, even if so, the designed function can hardly be optimal in $\Xi$.

\textbf{Calibration with the neural term structure}: denote an $N \in \N$ layer neural network $\eta$
\begin{align}  \label{eq:eta}
\eta: \R_+ \to \R^3_+,
&&\tau \mapsto \eta(\tau) = \(\sigma^\eta(\tau), \alpha^\eta(\tau), \beta^\eta(\tau)\),
\end{align}
with
\begin{align} \label{eq:nn}
\eta = \phi_N \circ g_N \dots \circ \phi_1 \circ g_1,
\end{align}
where $\circ$ denotes the function composition. For $1 \le i \le N$, $d_0=1$ and $d_N=3$, $g_i: \R^{d_{i-1}} \to \R^{d_i}$ are composable affine maps given by
\begin{align}
g_i(\tau) = w_i \tau + b_i,
&& w_i \in \R^{d_{i-1} \times d_i},
&& b_i \in \R^{d_i},
\end{align}
and $\phi_i: \R^{d_i} \to \R^{d_i}$ are some activation function. Let $w = \{w_i, b_i\}_{i=1}^{N}$ gather the neural network weights, the neural network can be represented by $\eta(\cdot; w)$. The calibration task with mean squared error \eqref{eq:obj_p} is thus translated to
\begin{align}
\min_{\{w: \ \eta(\cdot; w) \in \Xi\}} L_P(\sigma^\eta, \alpha^\eta, \beta^\eta; w),
\end{align}
that is to train a neural network $\eta$ which represents a term function triplet $(\sigma^\eta, \alpha^\eta, \beta^\eta)$ satisfying Assumption~\ref{ass:term}, and minimizes the calibration error. Moreover, these assumptions can be translated into either soft constraints or hard constraints as follows.

Monotonic conditions \ref{ass:one} and \ref{ass:two} can be facilitated by soft constraints with loss functions
\begin{align} \label{eq:obj_c}
L_C(\sigma^\eta, \alpha^\eta, \beta^\eta; w) 
    = \(\pd_\tau\frac{\alpha^\eta(\tau)}{\sigma^\eta(\tau)}\)^+
    + \(\pd_\tau\frac{\beta^\eta(\tau)}{\sigma^\eta(\tau)}\)^+
    + \(-\pd_\tau \sigma^\eta(\tau)\)^+,
\end{align}
where the derivatives on the neural network can be easily evaluated with automatic differentiation, which is available in existing machine learning frameworks such as \cite{abadi2016tensorflow} and \cite{paszke2017automatic}. The positivity condition \ref{ass:three} is guaranteed by choosing a proper activation function at the final layer. For instance, we can let the last activation function $\phi^N$ in Equation~\eqref{eq:nn} be the softplus activation function. The inequality \ref{ass:four} can be satisfied by reformulating $\eta = (\sigma, \alpha, \beta-\sigma)$, i.e., learning the difference between $\beta$ and $\sigma$ and ensuring its positivity by a certain activation function, instead of learning $\beta$ directly.

Therefore, combining the pricing loss $L_P$ and regularization loss $L_C$, the neural term structure can be trained with the total loss function
\begin{align} \label{eq:obj_total}
L(w) = L_P(\sigma^\eta, \alpha^\eta, \beta^\eta; w) + \lambda L_C(\sigma^\eta, \alpha^\eta, \beta^\eta; w)    
\end{align}
for some regularization parameter $\lambda > 0$ that is properly chosen to penalize any violation of Assumption~\ref{ass:term}. The neural network with the total loss function \eqref{eq:obj_total} can be trained with choices of gradient-based optimizers such as the conventional Adam \cite{kingma2014adam}.

\subsection{Calibration with Sequence of Surfaces} \label{subsec:cali_seq}
Due to perpetually changing market conditions, the implied volatility surface changes dynamically over time as well. Consequently, a previously well-calibrated model might become outdated and cannot produce rightful option prices in the next time step. Recalibration is a common practice in real scenarios to update the model to the newest market information.

A conventional choice is to periodically redo the calibration task \eqref{eq:obj_p} in Section~\ref{subsec:cali_one}. To notate the recalibration procedure clearly in the dynamic scenario, we reserve the lowercase $t$ to the calendar date, let uppercase $T$ notate for the expiry date of options, and denote the tenor of options by $\tau$, e.g., at a time $t$, an option expire at time $T$ has the time-to-maturity $\tau = T - t$. It is practical to quote the market price of a call option by moneyness $\kappa$ and tenor $\tau$ instead of strike $K$ and maturity date $T$, and it is easy to adapt the pricing formula \eqref{eq:option} to such reparametrization. With slight abuse of notations, we override $\cK$ and $\cT$ to be the set of $m$ moneyness and $n$ tenors, i.e. $\cK = \{\kappa_1, \dots, \kappa_m\}$ and $\cT = \{\tau_1, \dots, \tau_n\}$. 

The term structure calibrated to the market at $t$ is then subscribed by the time index as well: $\tau \mapsto (\sigma_t(\tau), \alpha_t(\tau), \beta_t(\tau))$. Assume the calibration happens daily and is scheduled in a certain set of $l \in \N$ time $\cD = \{t_1, \dots, t_l\}$. Based on the discussed recalibration practice, every day we will do optimization \eqref{eq:obj_p} for the optimal term structure, so we will have $l$ sets of term structure $(\sigma_{t_i}(\tau), \alpha_{t_i}(\tau), \beta_{t_i}(\tau))_{i=1}^l$. However, such day-by-day recalibration is not robust.

Even when the market condition mildly changes from $t_1$ to $t_2$, due to numerical instability, the term structure calibrated at $t_2$ might significantly differ from the one calibrated at $t_1$. Consequently, a decision such as hedging based on $t_1$ calibration may deviate substantially from the optima at $t_2$. Therefore, a desirable calibration should not only fit the market at both $t_1$ and $t_2$, but also be robust against self-variation. Such robustness can be enhanced by training the neural term structure jointly with a sequence of historical implied volatility surfaces.

We hence extend the neural term structure in Equation~\eqref{eq:eta} to
\begin{align} \label{eq:eta_t}
\eta: \R_+^2 \to \R_+^3,
&& (t, \tau) \mapsto \eta(t, \tau) = \(\sigma^\eta(t, \tau), \alpha^\eta(t, \tau), \beta^\eta(t, \tau)\).
\end{align}
Correspondingly, the loss function in \eqref{eq:obj_p} is modified to cover the market data across multiple dates in $\cD$:
\begin{align} \label{eq:obj_pt}
L_P'(\sigma, \alpha, \beta)
= \frac{1}{\abs{\cD}\abs{\cK}\abs{\cT}}\sum_{\substack{t\in \cD \\ \kappa \in \cK \\ \tau \in \cT}} \(C^M_t(\kappa, \tau) - C(\kappa, \tau; \sigma(t, \tau), \alpha(t, \tau), \beta(t, \tau))\)^2.
\end{align}
Substituting $L_P$ by $L_P'$ in the total loss function \eqref{eq:obj_total}, we readily adapt the neural term structure framework to deal with the joint calibration task.

By extending the neural term structure to jointly calibrate to the implied volatility surface sequence, in addition to the enhanced robustness of the term structure, we obtain a side product in the form of the \textit{term structure surfaces} represented by $(\sigma(t, \tau), \alpha(t, \tau), \allowbreak \beta(t, \tau))$, which gives a dynamic view of the implied volatility evolution.

Recall that a conventional implied volatility surface $\Sigma_t(\kappa, \tau)$ only informs a static view of the market at time $t$. Along the direction of moneyness $\kappa$, the volatility smile essentially implies that the risk-neutral measure has a heavier tail than the Gaussian density, as used in the Black-Scholes model. The volatility smirk implies the asymmetry of the risk-neutral measure. Both the smile and smirk are quantified by the skew of the implied volatility $\pd_\kappa \Sigma$. Through the direction of tenor $\tau$, the implied volatility term structure $\pd_\tau \Sigma$ informs dispersion of the risk measure from the viewpoint of time-average instantaneous Gaussian volatility.

Though one can quickly infer the current market status from the implied volatility surface, it is hard to keep track of market variation by contrasting multiple implied surface $(\Sigma_{t_i})_{i=1}^{l}$ snapshots across different days. Because the term structure $(\sigma(t, \tau), \alpha(t, \tau), \beta(t, \tau))$ innately encompasses the skews without the explicit parameterization for the moneyness, we can make use of the spare one dimension for the calendar date to reflect the term structure dynamics. Three 3D surfaces with value, time, and tenor axes that can equivalently encode the market information, the term structure surfaces, can be hitherto plotted.

For a fixed date $t$, along the tenor dimension, $\sigma(t, \cdot)$ represents the market view on the dispersion of the risk-neural density through different maturities, while $\alpha(t, \cdot)$ and $\beta(t, \cdot)$ indicates the left-and right-skewness of the density. For a fixed tenor $\tau$, one can make use of $(\sigma(\cdot, \tau), \alpha(\cdot, \tau), \beta(\cdot, \tau))$ to track the market morph trough time. Additionally, for an option released on date $t_0$ with tenor $\tau_0$, then surfaces along the line $(t_0 + \delta t, \tau_0 - \delta t)$ for $0 \le \delta t \le \tau_0$ cover the historical information through the life span of this option.

\section{Numerical Experiments} \label{sec:exp}
We report three experiments in this section. The first experiment uses synthetic data generated with pseudo term structure selected from the parametric functions \eqref{eq:term_func} to validate that the neural term structure \eqref{eq:eta} can perfectly recover the selected functions. In the second experiment, we compare the calibration performance on a single implied volatility surface of the neural term structure \eqref{eq:eta} against the chosen parametric term structure in \eqref{eq:term_func}, showing that the neural term structure outperforms. Finally, we implement the extended version \eqref{eq:eta_t} to fit the historical sequence of implied volatility surfaces and visualize the calibrated term structure surfaces.

\begin{figure}[h]
    \centering
    \includegraphics[width=0.6\linewidth]{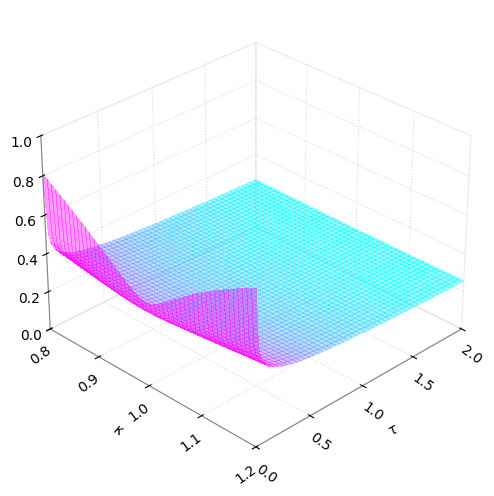}
    \caption{Synthetic implied volatility surface}
    \label{fig:iv_sim}
\end{figure}

\subsection{Synthetic Surface} \label{subsec:exp_syn}

We generate the synthetic surface induced by the additive logistic process using the term structure functions provided in \cite{azzone2023explicit}. Specifically, we choose $(\sigma^\theta, \alpha^\theta, \beta^\theta)$ to be $(\sigma^a, \alpha^b, \beta^b)$, as in Equation~\eqref{eq:term_func}, and let $\theta$ collect those term structure functions as
\begin{equation} \label{eq:theta}
\begin{aligned}
\theta(\tau)
    := \theta(\tau; \sigma_0, \alpha_0, \alpha_1, \beta_0, \beta_1, H_0) 
    = \(\sigma^\theta(\tau; \sigma_0, H_0), \alpha^\theta(\tau; \alpha_0, \alpha_1), \beta^\theta(\tau; \beta_0, \beta_1, \sigma_0, H_0)\).
\end{aligned}
\end{equation}
Here, the parameters of $\theta$ are specified as
\begin{align} \label{eq:theta_param}
\sigma_0 = 0.15, 
&& H_0 = 0.45, 
&& \alpha_0 = 0.8, 
&& \alpha_1 = 1, 
&& \beta_0 = 0.7, 
&& \beta_1 = 2.
\end{align}
The implied volatility surface is gridded on $m=50$ equally spaced moneyness from $0.8$ to $1.2$ and $n=100$ tenors ranging from $1/365.25$ to $2$ that correspond to one day to two years. Figure~\ref{fig:iv_sim} illustrates the synthetic implied volatility surface generated by the aforementioned configuration. The green curves in Figure~\ref{fig:term_sim} visualize the selected parametric functions in blue curves.

The neural term structure $\eta$ is represented by a two-layer feedforward neural network of 32 hidden neurons, each with a ReLU activation function. We use the conventional Adam as the optimizer to minimize the loss function $L$ in Equation~\eqref{eq:obj_total}. The neural term structure with such simple configurations effortlessly converges to the selected parametric functions and meets the conditions in Assumption~\ref{ass:term} with zero loss from the penalty term $L_C$. Figure~\ref{fig:term_sim} plots the learned neural term structure $\eta$ in green curves and we see that they overlap well with the ground truth parametric term structure $\theta$ in blue curves. 

\begin{figure}[h]
    \centering
    \includegraphics[width=0.6\linewidth]{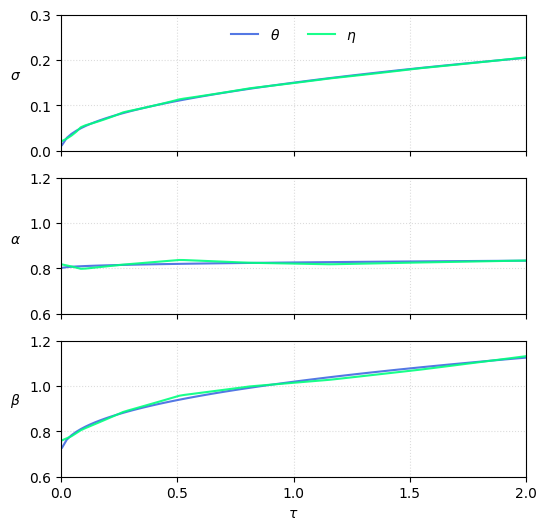}
    \caption{Synthetic term structure}
    \label{fig:term_sim}
\end{figure}

This simple test demonstrates the advantage of the neural term structure. One can easily harness the functional approximation flexibility of neural networks to bypass the hassle of designing proper parametric functions for the term structure.

\subsection{Single Surface} \label{subsec:exp_one}
As explained in Section~\ref{subsec:cali_one}, by prescribing parametric functions, one can only cover a small part of the feasible functional space $\Xi$, which might not capture the true term structure of the real data. Instead, the neural term structure overcomes this limitation, as it encompasses a larger family of functions. This can be seen in the following experiment of single surface calibration.

We choose the S\&P 500 implied volatility surface on 2022-06-30, quoted from Bloomberg, which is described by options in the grid of 13 moneynesses from 0.6 to 1.75 and 43 maturities from one week to approximately two and a half years.

The neural term structure consists of two hidden layers with 256 neurons and ReLU activation in each. The benchmark model is the parametric term structure $\theta$ as in Equation~\eqref{eq:theta}. $\theta$ is optimized against the loss function that is adapted from \eqref{eq:obj_pt} by replacing $\eta$ and its weights $w$ by $\theta$ and the parameters $(\sigma_0, H_0, \alpha_0, \alpha_1, \beta_0, \beta_1)$.

Figure~\ref{fig:iv_hist_one} visualizes the calibration results tenor by tenor. The red, blue, and green curves correspondingly represent the real market implied volatility smiles, the fitted smiles with the parametric term structure $\theta$, and the learned smiles with the neural term structure $\eta$. We see that both the blue curves and the green curves mostly overlap with the red curves, while the green curves align better. That is, both the $\theta$ and $\eta$ fit the implied volatility, yet $\eta$ calibrates better than $\theta$. The performance difference can be seen clearly in Table~\ref{tb:cali_one}.

\begin{figure}[h]
    \centering
    \includegraphics[width=0.9 \linewidth]{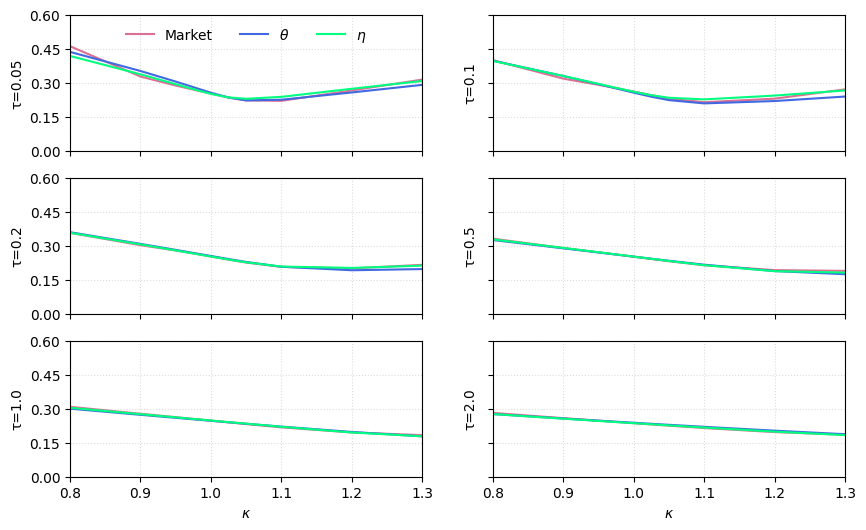}
    \caption{Implied volatility smiles on 2022-06-30}
    \label{fig:iv_hist_one}
\end{figure}

Table~\ref{tb:cali_one} reports both tenor-wise mean squared errors for $\tau \in \{0.05, 0.1, 0.25, 0.5, 1.0, 2.0\}$, which corresponds to about half month, one month, a quarter, half year, one year, and two years, and the overall mean squared calibration error, of $\theta$ and $\eta$. We see that $\eta$ systematically outperforms $\theta$ in either single tenor slice or overall joint tenors.

\begin{table}[h]
\centering
\begin{tabular}{@{}llllllll@{}}
\toprule
          $\tau$ & $0.05$ & $0.1$ & $0.25$ & $0.5$ & $1.0$ & $2.0$ & All \\ \midrule
$\theta$     &   0.45         &   0.33          &    0.17        &    0.68        &    2.06      & 5.72   &  1.59\\
$\eta$ &    0.05        &     0.13        &      0.09      &      0.49      &       1.13  & 2.69 & 0.71\\ 
\bottomrule
\end{tabular}
\caption{Mean squared calibration error $(\times 10^{-2})$}
\label{tb:cali_one}
\end{table}

We examine the difference between the calibrated term structures of $\eta$ and $\theta$ with Figure~\ref{fig:term_hist_one}. The blue curves and the green curves correspondingly represents $(\sigma^\theta, \alpha^\theta, \beta^\theta)$ and $(\sigma^\eta, \alpha^\eta, \beta^\eta)$. Surprisingly, we see the blue curves and the green curves do not agree with each other. This implies that even two divergent term structures may result in closer calibration performance.

\begin{figure}[h]
    \centering
    \includegraphics[width=0.6 \linewidth]{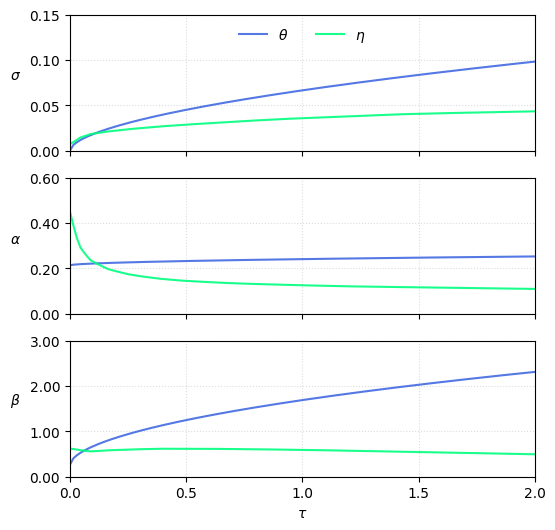}
    \caption{Term structure on 2022-06-30}
    \label{fig:term_hist_one}
\end{figure}

With a closer look, we notice that the GL probability measures produced by both term structures are very similar, so they result in akin implied volatility smiles. This phenomenon raises concerns about the stability of calibration in the practice that one should obtain two dissimilar term structures even on the same data. It would not help with hedging decisions and could be even worse when the implied volatility varies from day to day.   

Therefore, we need a term structure that can not only perform well on one specific surface but also be robust against large variations owning to the surface dynamics. For this, we adapt the neural term structure as in Section~\ref{subsec:cali_seq} to train on a sequence of implied volatility surfaces to produce a set of term structures that have generally consistent shapes and also fit each specific surface.

\subsection{Sequence of Surfaces}
We select the historical sequence of daily S\&P 500 implied volatility surfaces quoted from Bloomberg from 2022-06-30 to 2022-12-30, with calendar time encoded by $t \in [0, 0.5]$. The moneyness-tenor grid is the same as in Section~\ref{subsec:exp_one}.

The extended neural term structure $\eta(t, \tau)$ as in \eqref{eq:eta_t} has a two-layer feedforward neural network, with 256 hidden neurons and ReLU activation functions in each layer. The loss function is as in Equation~\eqref{eq:triplet} with $L_P$ replaced by $L_P'$ in Equation~\eqref{eq:obj_pt}. The Adam optimizer is employed as before and we schedule the learning rate decay to accelerate the training process.

Figure~\ref{fig:iv_hist} demonstrates the learned implied volatility smiles by $\eta$ in green curves versus the market implied volatility smiles in red curves for $\tau=0.1, 0.5$, and $2.0$ tenors. Dates reported are $t=0, 0.28$, and $0.50$ which correspond to the representative dates 2022-06-30, 2022-10-10, and 2022-12-30, where the volatility surge on 2022-10-10 is worth noting. We see that $\eta$ fits the smiles well on every single date, which indicates that it also captures the surface dynamics.

\begin{figure}[h]
    \centering
    \includegraphics[width=0.6\linewidth]{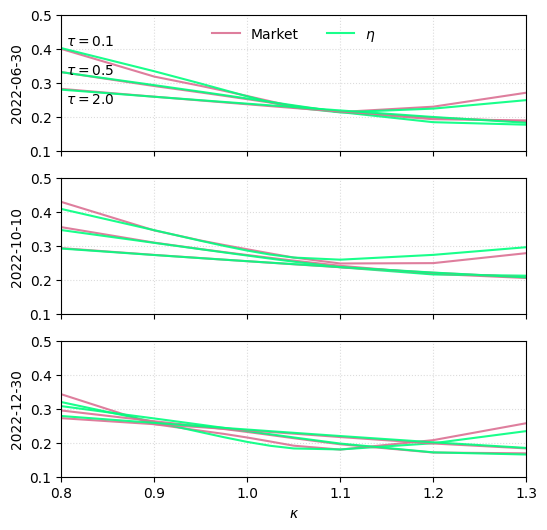}
    \caption{Sequence of implied volatility smiles}
    \label{fig:iv_hist}
\end{figure}

Table~\ref{tb:cali_two} shows the calibration performance of $\eta(t, \tau)$. We also report the results of single surface calibrations $\eta(\tau)$ as in Section~\ref{subsec:exp_one}. The jointly calibrated $\eta(t, \tau)$ has a slightly higher calibration error than the separately calibrated $\eta(\tau)$. However, by surrendering a tiny fraction of accuracy, $\eta(t, \tau)$ attains smooth term structure surfaces that robustly adapt to the changing market conditions.

\begin{table}[h]
\centering
\begin{tabular}{@{}lccc@{}}
\toprule
          Date & 2022-06-30 & 2022-10-10 & 2022-12-30 \\ \midrule
$\eta(t, \tau)$     &   0.98   &   0.57      &    1.00\\
$\eta(\tau)$ &    0.71      &     0.43     &      0.39\\ 
\bottomrule
\end{tabular}
\caption{Mean square calibration error $(\times 10^{-2})$}
\label{tb:cali_two}
\end{table}

Figure~\ref{fig:term_hist} plots these term structure surfaces. We see the general patterns of each surface for $\sigma(t, \tau)$, $\alpha(t, \tau)$, and $\beta(t, \tau)$ are consistent. The sigma surface is increasing and concave through tenor $\tau$. The $\alpha$ and $\beta$ surfaces are mostly decreasing along the tenor. Their evolutions along the calendar axis $t$ reflect the implied volatility dynamics. The bump in the $\sigma$ surface near $t=0.28$ reflects exactly the surge of the overall volatility level around 2022-10-10. The dent in the $\beta$ surface between around $t=0.20$ and $t=0.30$ indicates the decrease of right skewness caused by the at-the-money volatility upturn between roughly 2022-09-09 and 2022-10-17. 

\begin{figure}
    \centering
    \begin{subfigure}{0.6\textwidth}
        \includegraphics[trim={1cm 4cm 1cm 4cm},clip, width=\textwidth]{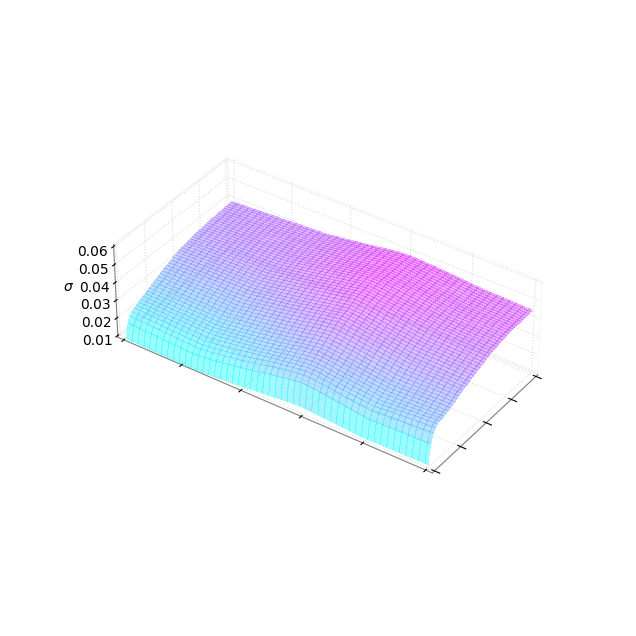}
    \end{subfigure}\\
    \begin{subfigure}{0.62\textwidth}
        \includegraphics[trim={1cm 4cm 1cm 4cm},clip,width=\textwidth]{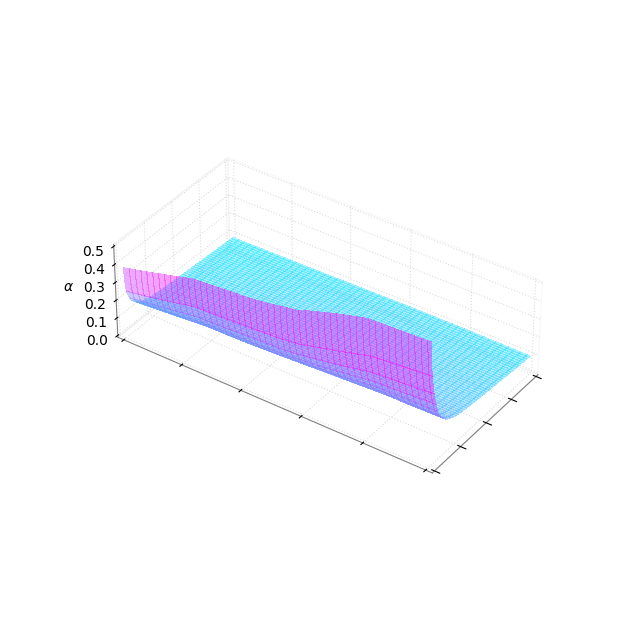}
    \end{subfigure}\\
    \begin{subfigure}{0.6\textwidth}
        \includegraphics[trim={1cm 3cm 1cm 4cm},clip,width=\textwidth]{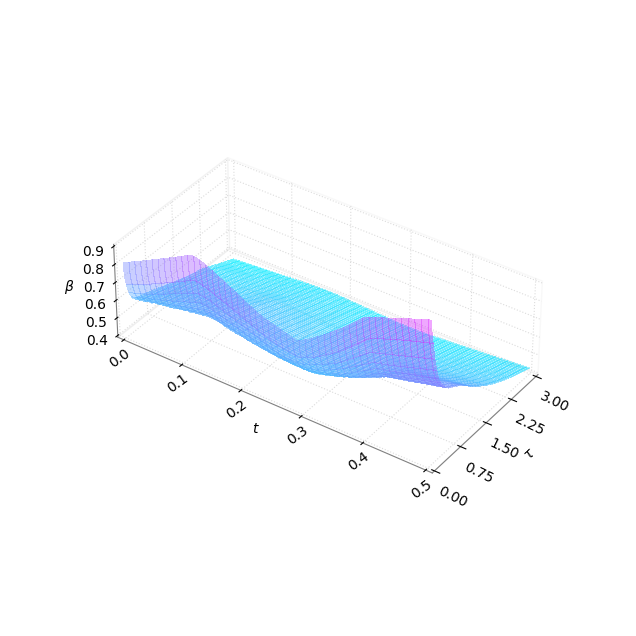}
    \end{subfigure}
    \caption{Calibrated term structure surface}\label{fig:term_hist}
\end{figure}

\section{Conclusion}\label{sec:con}

The neural term structure proposed in this research is a simple, yet powerful approach for calibrating an additive model to implied volatility surfaces. Without the need to prescribe any parametric function, it attains even better calibration accuracy. Moreover, by slightly extending the neural network to feed on time-varying volatility surfaces, it can robustly encapsulate the change of market conditions with smooth flows of term structures that are numerically stable. We demonstrate the advantage of the neural term structure with perhaps the simplest additive process that is supported by the GL distribution. Even as parsimonious as with only three parameters, the additive logistic option pricing model empowered by the neural term structure achieves persuasive calibration performance. It is promising for future research to develop better additive option pricing models by devising more sophisticated additive processes and combining them with the neural term structure.

\bibliographystyle{chicago}
\bibliography{ref}

\end{document}